\newtheorem{theorem}{Theorem}
\acrodef{TDOA}{time difference of arrival}
\acrodef{STFT}{short-time Fourier transform}
\acrodef{TF}{transfer function}
\acrodef{RTF}{relative transfer function}
\acrodef{GSC}{generalized sidelobe canceler}
\acrodef{LCMV}{linearly constrained minimum-variance}
\acrodef{RCB}{robust Capon beamformer}
\acrodef{FBF}{fixed beamformer}
\acrodef{BM}{blocking matrix}
\acrodef{NC}{noise canceler}
\acrodef{TF-GSC}{transfer function-\ac{GSC}}
\acrodef{DD-GSC}{Data Driven-\ac{GSC}}
\acrodef{RKHS}{reproducing kernel Hilbert space}
\acrodef{VV-RKHS}{vector-valued reproducing kernel Hilbert space}
\acrodef{VVMR}{vector-valued manifold regularization}
\acrodef{DOA}{direction of arrival}
\acrodef{SNR}{signal to noise ratio}
\acrodef{PSD}{power spectral density}
\acrodef{WLS}{weighted least square}
\acrodef{AIR}{acoustic impulse response}
\acrodef{RTF}{relative transfer function}
\acrodef{ATF}{acoustic transfer function}
\acrodef{WGN}{white Gaussian noise}
\acrodef{CPSD}{cross power spectral density}
\acrodef{GMM}{Gaussian Mixture Model}
\acrodef{GCC}{generalized cross-correlation}
\acrodef{RMSE}{root mean square error}
\acrodef{DRR}{direct to reverberant ratio}
\acrodef{ML}{maximum likelihood}
\acrodef{MRL}{Manifold Regularized Localization}
\acrodef{KNN}{k-nearest neighbours}
\acrodef{DD-KNN}{Diffusion Distance k-nearest neighbours}
\acrodef{SVD}{singular value decomposition}
\acrodef{SMR}{Sequential Manifold Regularization}
\acrodef{MRL}{Manifold Regularization for Localization}
\acrodef{DDS}{Diffusion Distance Search}
\acrodef{EM}{Expectation Maximization}
\acrodef{MUSIC}{multiple signal classification}
\acrodef{ESPRIT}{estimation of signal parameters via rotational invariance}
\newcommand\numberthis{\addtocounter{equation}{1}\tag{\theequation}}
\newcommand{\argmin}{\operatornamewithlimits{argmin}}
\begin{document}

\title{Semi-Supervised Sound Source Localization Based on Manifold Regularization}
\author{Bracha~Laufer-Goldshtein~\IEEEmembership{Student Member,~IEEE}, Ronen~Talmon,~\IEEEmembership{Member,~IEEE} and Sharon~Gannot,~\IEEEmembership{Senior Member,~IEEE}
\thanks{Bracha~Laufer-Goldshtein and Sharon Gannot are with the Faculty of Engineering, Bar-Ilan University,
	Ramat-Gan, 5290002, Israel (e-mail: bracha\_gold@walla.com, Sharon.Gannot@biu.ac.il); Ronen Talmon is with the Department of Electrical Engineering, The Technion-Israel Institute of Technology, Technion City, Haifa 3200003, Israel, (e-mail: ronen@ee.technion.ac.il).}}

\maketitle

\begin{abstract}
Conventional speaker localization algorithms, based merely on the received microphone signals, are often sensitive to adverse conditions, such as: high reverberation or low \ac{SNR}. In some scenarios, e.g. in meeting rooms or cars, it can be assumed that the source position is confined to a predefined area, and the acoustic parameters of the environment are approximately fixed. Such scenarios give rise to the assumption that the acoustic samples from the region of interest have a distinct geometrical structure. In this paper, we show that the high dimensional acoustic samples indeed lie on a low dimensional manifold and can be embedded into a low dimensional space. Motivated by this result, we propose a semi-supervised source localization algorithm which recovers the inverse mapping between the acoustic samples and their corresponding locations. The idea is to use an optimization framework based on manifold regularization, that involves smoothness constraints of possible solutions with respect to the manifold. The proposed algorithm, termed \ac{MRL}, is implemented in an adaptive manner. The initialization is conducted with only few labelled samples attached with their respective source locations, and then the system is gradually adapted as new unlabelled samples (with unknown source locations) are received. Experimental results show superior localization performance when compared with a recently presented algorithm based on a manifold learning approach and with the \ac{GCC} algorithm as a baseline.
\end{abstract}

\begin{keywords}
sound source localization, \ac{RTF}, manifold regularization, \ac{RKHS}, diffusion distance.
\end{keywords}

\section{Introduction and Motivation}
\label{sec:intro}
The problem of source localization has attracted the attention of many researchers during the last decades. Various applications rely on the recovery of the spatial position of an emitting source, such as: automated camera steering, teleconferencing and beamformer steering for robust speech recognition. For this reason, considerable amount of efforts have been devoted to investigate this field and a wide range of methods have been proposed over the years. Common to all localization approaches is the utilization of multiple microphone recordings to infer the spatial information. The fundamental challenge is to attain robust localization in poor conditions, i.e., in the presence of high reverberation and background noises.

Conventional localization approaches can be roughly divided into two main categories: single- and dual-step approaches. In the first class of algorithms, the source location is determined directly from the microphone signals. The most dominant member of this class is the \ac{ML} algorithm. The algorithm is derived by applying the \ac{ML} criterion to a chosen statistical model of the received signals. This optimization often involves maximization of the output power of a beamformer, steered to all potential source locations~\cite{stoica1990maximum,chen2002maximum,zhang2007maximum}. Another type of single-stage approaches is high resolution spectral estimation methods, such as the well-known \ac{MUSIC} algorithm~\cite{schmidt1986multiple}, and the \ac{ESPRIT} techniques~\cite{roy1989esprit}.

In the dual-step approaches category, the first stage involves \ac{TDOA} estimation from spatially separated microphone pairs. The classical method for \ac{TDOA} estimation is the \acf{GCC} algorithm introduced in the landmark paper by Knapp and Carter~\cite{Knapp1976}. The \ac{GCC} method relies on the assumption of a reverberant-free model such that the \ac{ATF}, which relates the source and each of the microphones, is a pure delay. However, this assumption does not hold in the presence of room reverberation, rendering a performance deterioration~\cite{champagne1996performance}. Consequently, improvements of the \ac{GCC} method for the reverberant case were proposed~\cite{brandstein1997robust,stephenne1997new,dvorkind2005time}.

In the second algorithmic stage, the noisy \ac{TDOA} estimates are combined to carry out the actual localization. Each \ac{TDOA} estimate is associated with an infinite set of source positions, lying on a half of an hyperboloid. The locus of the speaker can be recovered by intersecting the hyperboloid surfaces corresponding to the measurements of different pairs of microphones. However, the computation of a 3-dimensional hyperboloids intersection is a cumbersome task and tends to be sensitive to \ac{TDOA} estimation errors. In far-field regime the hyperboloid can be approximated by a cone, and linear intersection estimate can be applied~\cite{brandstein1997closed}. Another simplifying approach is to recast the hyperbolic equations into a spherical form, and apply the nonlinear least squares approach~\cite{huang2001real}.

All the prementioned methods utilize the spatial information conveyed by the received signals, but do not rely on any prior information about the enclosure in which the measurements are obtained. In some scenarios, e.g. in meeting rooms or cars, the source position is confined to a predefined area. It is reasonable to assume that representative samples from the region of interest can be measured in advance. Examining the structures and patterns characterizing the representative samples can be utilized for formulating a data-driven model which relates the measured signals to their corresponding source positions. The additional information may help to better cope with the challenges posed by reverberation and noise. So far, only few attempts were made to involve training information for performing source localization.

Deleforge and Horaud in~\cite{deleforge20122d}, discussed a 2-D sound localization scheme, in the binaural hearing context. Their central assumption is that the binaural observations lie on an intrinsic manifold which is locally linear. Accordingly, they proposed a probabilistic piecewise affine regression model, that learns the localization-to-interaural mapping and its inverse. In~\cite{deleforge2013variational,deleforge2015acoustic}, the authors have generalized the algorithm to deal with multiple sources using variational \ac{EM} framework.

In~\cite{xiong2015} the task of \ac{DOA} estimation was formulated as a classification problem and a learning-based approach was presented. They proposed to extract features from the \ac{GCC} vectors and use a multilayer perceptron neural network to learn the nonlinear mapping from such features to the \ac{DOA}.

Talmon et al.~\cite{talmon2012parametrization} introduced a supervised method based on \emph{manifold learning}, using \emph{diffusion kernels}. The main idea is specifying the fundamental controlling parameters of the \ac{AIR} using a manifold learning scheme. Assuming that the position of the source is the only varying degree-of-freedom of the system at hand, this process is capable of recovering the unknown source locations. The key point of the algorithm is to use an appropriate diffusion kernel with a specifically-tailored distance measure, that is capable of finding the underlying independent parameters, dominating the system. Talmon et al.~\cite{Talmon2011} have applied this method to a single microphone system with a \ac{WGN} input.

In~\cite{laufer2013relative} we adopted the paradigm of~\cite{Talmon2011} and adapted it to a more realistic setting where the source is a speech signal rather than a \ac{WGN} signal. The power spectral density of the speech signal is non-flat (as well as non-stationary). Hence, the spectral variations may blur the variations attributed to the different possible locations of the source. In order to mitigate this problem, we committed two major changes in the algorithm presented in~\cite{Talmon2011}: 1) a second microphone was added and 2) the feature vector, that was originally based on the correlation function has been replaced by a \ac{PSD}-based vector. It should be emphasized that in~\cite{Talmon2011} the feature vector was associated with the \ac{AIR}, whereas in~\cite{laufer2013relative} the feature vector relied on the \acf{RTF} which is the Fourier transform of the relative impulse response.

Though localization algorithms based on the diffusion framework were shown to perform well, their fundamental drawback is that they do not provide any guarantee for optimality. In general the diffusion-based methods are implemented by a dual-stage approach. First, a low dimensional embedding of the representative samples is recovered in an unsupervised manner. Second, the new representation is used to estimate the unknown locations based on the labelled samples. The septation into two stages where one is entirely unsupervised and the other is entirely supervised is not necessarily optimal. Moreover, the unlabelled data are not exploited for the estimation itself.

The significance of combining both labelled and unlabelled data, in the source localization context, should be farther emphasized. Classification and regression algorithms which rely on training data, are very popular in various applications, such as: text categorization, handwriting recognition, images classification and speech recognition. Nowadays, there exist a rich database for each of these tasks, with considerable amount of examples with true labellings. Thus, these problems are more usefully solved using fully supervised approaches. On the contrary, in the localization problem the training should fit to the specific acoustic environment in which the measurements are obtained, thus, we cannot create a general database that corresponds to all possible acoustic scenarios. Instead, the training set should be generated individually for each acoustic environment. To obtain labelled data, one needs to generate recordings in a controlled manner and calibrate each of them precisely. Generating a large amount of labelled data is a cumbersome and impractical process. However, unlabelled data is freely available since it can be collected whenever someone is speaking. This greatly motivates the use of semi-supervised approaches, which mostly rely on unlabelled data, for the source localization problem. Another motivation is related to the special characteristics of the acoustic environment. As will be further elaborated in the paper, the unlabelled data can be utilized for forming a data-driven model of the acoustic environment that is very useful for performing robust source localization.

To address the limitations of the previous diffusion-based approaches, and to better utilize the unlabelled data, we propose the \acf{MRL} algorithm. The method recovers the inverse mapping between the acoustic samples and their corresponding locations. The gist of the algorithm is based on the concepts of manifold regularization on a \acf{RKHS}, introduced by Belkin et al.~\cite{Belkin2006}. The idea is to extended the standard supervised estimation framework by adding an extra regularization term which imposes a smoothness constraint on possible solutions with respect to a data-driven model. The model is learned empirically by forming a data adjacency graph over both labelled and unlabelled training samples. In this approach, the estimated location relies not only on the labelled samples, but also on the unlabelled ones. Moreover, in order to efficiently utilize unlabelled samples received during runtime, we propose an adaptive implementation. The \ac{MRL} algorithm iteratively updates the system, based on the new information which becomes available while accumulating new unlabelled data. We compare the proposed algorithm, with the \acf{DDS} method, which is a diffusion-based algorithm. 
The discussion is supported by an experimental study based on simulated data.

The paper is organized as follows. In Section~\ref{sec:problem}, we formulate the problem in a general noisy and reverberant environment. We motivate the choice of the \ac{RTF} for forming a feature vector and describe how it can be estimated based on the microphone measurements. In Section~\ref{sec:algo}, we discuss the existence of an acoustic manifold and formulate an optimization problem which relies on a data-driven model computed based on both labelled and unlabelled data. This formulation leads to the \ac{MRL} algorithm which is sequentially adapted by the unlabelled data accumulated during runtime. We briefly describe our previous localization method based on the diffusion framework \cite{laufer2013relative} in Section~\ref{sec:diffusion}. Accordingly, we describe the derivation of the \ac{DDS} algorithm which conducts a neighbours' search using the diffusion distance as an affinity measurement between \acp{RTF}. In Section~\ref{sec:results}, we demonstrate the algorithms' performance by an extensive simulation study. A comparison between the \ac{MRL} and the \ac{DDS} algorithms is carried out in Section~\ref{sec:discuss}. Section~\ref{sec:conclusions} concludes the paper.

\section{Problem Formulation}
\label{sec:problem}
We consider a standard enclosure, e.g., a conference room or a car interior, with moderate reverberation time. A single source located at $\mathbf{p}=[p_x,p_y,p_z]^T$ generates an unknown speech signal $s(n)$, which is received by a pair of microphones. The received signals, denoted by $x(n)$ and $y(n)$, are contaminated by an additive stationary noise, and are given by:
\begin{align}
x(n)&=a_1(n,\mathbf{p})*s(n)+u_1(n) \\
y(n)&=a_2(n,\mathbf{p})*s(n)+u_2(n) \numberthis
\label{eq:mics}
\end{align}
where $n$ is the time index, $a_i(n,\mathbf{p}),~i=\{1,2\}$ are the corresponding \acp{AIR} relating the source at position $\mathbf{p}$ and each of the microphones and $u_i(n),~i=\{1,2\}$ are uncorrelated \ac{WGN} signals. Linear convolution is denoted by $\ast$. Each of the \acp{AIR} is composed of the direct path between the source and the microphone, as well as reflections from the surfaces characterizing the enclosure. Consequently, even in moderate reverberation conditions, the \ac{AIR} is typically modelled as a long FIR filter.

The purpose is to localize the speaker based on the current received microphone signals $x(n)$ and $y(n)$. We assume that we are also given a set of prerecorded representative samples from the region of interest. The training set is composed of $N$ samples of measured signals $\{\bar{x}_i(n),\bar{y}_i(n)\}_{i=1}^N$ from various positions within the specified region. Only $l$ samples among the set are labelled, i.e., their originating position $\bar{\mathbf{p}}_i$ is known. The rest $u=N-l$ samples are unlabelled, namely, their corresponding source locations are unknown. To summarize, the training set is composed of $l$ labelled examples $\{\bar{x}_i(n),\bar{y}_i,\bar{p}_i\}_{i=1}^l$ and $u$ unlabelled examples $\{\bar{x}_i(n),\bar{y}_i\}_{i=l+1}^N$.

We are interested in a realistic scenario, where the amount of labelled data is significantly smaller than the amount of unlabelled data which can be collected online. Our goal is to build an on-line system which is initially given a small amount of labelled data, and is gradually adapted as new unlabelled samples are acquired.

The first step is to define an appropriate feature vector that faithfully represents the characteristics of the acoustic path and is invariant to the other factors, i.e., the stationary noise and the varying speech signals.
An equivalent representation of \eqref{eq:mics} is given by~\cite{Gannot2001}:
\begin{align*}
y(n)&=h(n,\mathbf{p})*x(n)+v(n)\\
v(n)&=u_2(n)-h(n)*u_1(n)  \numberthis
\label{eq:RIR}
\end{align*}
where $h(n,\mathbf{p})$ is the relative impulse response between the microphones with respect to the source, satisfying $a_2(n,\mathbf{p})=h(n,\mathbf{p})*a_1(n,\mathbf{p})$. In~\eqref{eq:RIR}, the relative impulse response represents the system relating the measured signal $x(n)$ as an input and the measured signal $y(n)$ as an output.

For convenience, we represent~\eqref{eq:RIR} in the frequency domain. The Fourier transform of the relative impulse response, termed the \ac{RTF}, is obtained by:
\begin{multline}
H(k,\mathbf{p}) = \frac{S_{yx}(k,\mathbf{p})}{S_{xx}(k,\mathbf{p})-S_{u_1u_1}(k)} =\\\frac{S_{ss}(k)A_2(k,\mathbf{p})A_1^*(k,\mathbf{p})}{S_{ss}(k)|A_1(k,\mathbf{p})|^2}
=\frac{A_2(k,\mathbf{p})}{A_1(k,\mathbf{p})} \quad k=0,\ldots,D-1
\label{eq:rtf}
\end{multline}
where $H(k,\mathbf{p})$ is the \ac{RTF}, $S_{yx}(k,\mathbf{p})$ is the \ac{CPSD} between $y(n)$ and $x(n)$, $S_{xx}(k,\mathbf{p})$ is the \ac{PSD} of $x(n)$, $S_{u_1u_1}(k)$ is the \ac{PSD} of the noise in the first microphone $u_1(n)$, and $S_{ss}(k)$ is the \ac{PSD} of the source $s(n)$. $A_1(k,\mathbf{p})$ and $A_2(k,\mathbf{p})$ are the \acp{ATF} of the respective \acp{AIR}, and $k$ denotes a discrete frequency index. The choice of the value of $D$ should balance the tradeoff between the correspondence with the relative impulse response length (large value) and latency considerations (small value).

Since $A_1(k,\mathbf{p})$ and $A_2(k,\mathbf{p})$ are unavailable, we estimate the \ac{RTF} by:
\begin{equation}
\hat{H}(k,\mathbf{p}) \equiv \frac{\hat{S}_{yx}(k,\mathbf{p})}{\hat{S}_{xx}(k,\mathbf{p})} .
\end{equation}
Note that this estimator is biased since we neglect the \ac{PSD} of the noise $S_{u_1u_1}(k)$. Alternatively, unbiased estimators can be used, such as the \ac{RTF} estimator based on the non-stationarity of the speech signal~\cite{Gannot2001}. However, we are not concerned with robust estimation of the \ac{RTF} since we will show that the proposed method is insensitive to this type of errors. Accordingly, we define the feature vector $\mathbf{h}_\mathbf{p}=[\hat{H}(0,\mathbf{p}), \ldots ,\hat{H}(D-1,\mathbf{p})]^T$ as the concatenation of estimated \ac{RTF} values in the $D$ frequency bins. In practice, we discard high frequencies in which the ratio in \eqref{eq:rtf} is meaningless due to weak speech components. For the sake of clarity, we omit the dependency on the position, and denote the \ac{RTF} feature vector by $\mathbf{h}$.

\section{Manifold Regularization for Localization}
\label{sec:algo}

Our goal is to recover the target function which transforms each \ac{RTF} to its corresponding location, based on the training set comprised of both labelled and unlabelled samples. Finding such an inverse mapping is non-trivial due to the complex nonlinear relation between the high dimensional \acp{RTF} and the originating locations. To mitigate this problem we adopt the concepts of manifold regularization, introduced by Belkin et al.~\cite{Belkin2005,Belkin2006}, and present it in the light of the acoustic environment and, in particular, for the source localization problem at hand. It is important to note that, originally, the concepts of manifold regularization were implemented for classification, whereas, here, it is applied to the problem of source localization which is a regression problem.

Two guiding principles are in the core of the proposed method, that will be termed \acf{MRL}. First, instead of using complex variational calculus for estimating the target function, we assume that the function resides in a \acf{RKHS}. Due to the special characteristics of the functions belonging to the \ac{RKHS}, the problem can be formulated simply as a system of linear equations. Second, we incorporate geometrical considerations, i.e., we use the information implied by the intrinsic patterns observed in the set of \acp{RTF} to build a data-driven model. Then, the solution is constrained to behave smoothly with respect to this data-driven model, representing the intrinsic structure of the \acp{RTF}.

\subsection{The Acoustic Manifold}
\label{sec:manifold}
As mentioned in Section~\ref{sec:problem}, the \acp{RTF} have a high dimensional representation in $\mathbb{C}^D$ that corresponds to the vast amount of reflections from the different surfaces characterizing the enclosure. We assume that the \ac{RTF} samples, drawn from a specific region of interest in the enclosure, are not spread uniformly in the entire space of $\mathbb{C}^D$. Instead, they are confined to a compact manifold $\mathcal{M}$ of dimension $d$, which is much smaller compared to the dimension of the ambient space, i.e. $d\ll D$. This assumption is justified by the fact that the \acp{RTF} are influenced by only a small set of parameters related to the physical characteristics of the environment, such as: the enclosure dimensions and shape, the surfaces' materials and the positions of the microphones and the source. Moreover, we focus on a static configuration, in which the properties of the enclosure and the position of the microphones remain fixed. In such an acoustic environment, the only varying degree of freedom is the source location. Accordingly, we assume that the \acp{RTF} can be intrinsically embedded in a low dimensional manifold which is governed by the position of the source. The existence of such an acoustic manifold was discussed in detail in~\cite{laufer2015}, and was demonstrated with respect to the \ac{DOA} of the source. The main results will be briefly described in the experimental part, in Section~\ref{sec:analysis}.

Roughly, we consider a manifold of reduced dimensions which may have a complex nonlinear structure. However, in small neighbourhoods the manifold is locally linear, meaning that in the vicinity of each point it is flat and coincides with the tangent plane to the manifold at that point. Hence, the Euclidean distance can faithfully measure affinities between points that resides close to each other on the manifold. For larger scales, the Euclidean distance is meaningless, and we should rather use the geodesic distance on the manifold. However, the geodesic distance can be evaluated only when the structure of the manifold is known. In order to respect the manifold structure we will only examine local connections between points and disregard larger distances.

\subsection{Background of Reproducing Kernel Hilbert Spaces}
\label{sec:rkhs}
Our goal is to find the inverse-mapping function that receives an \ac{RTF} sample and returns the corresponding source location. In general, estimating a function that minimizes a cost function, is a cumbersome task that requires complex mathematical tools, such as variational calculus. One simplifying approach is to assume that the target function belongs to a certain class of functions with a specific structure. For example, it can be assumed that the target function belongs to a certain space of functions, spanned by an orthogonal basis. Hence, the target function can be represented by a linear combination of the basis functions, where the weights are determined according to the projections of the function on each of the basis functions. In our case we assume that the target function belongs to a \acf{RKHS} associated with a unique kernel function that evaluates each function in the space by an inner product. Rather than computing the basis functions spanning the space, we use an analogues representation with linear combinations of the kernel function. According to this representation, the problem can be converted to a simple linear estimation of a finite set of parameters.

We will first represent the kernel function and its properties, and then define the \ac{RKHS} and discuss its representation by the kernel function that will be used for deriving the optimization problem in Section~\ref{sec:Opti}. In Appendix~\ref{sec:appA}, we show that the eigenfunctions associated with the kernel form an orthogonal basis for the \ac{RKHS}, and discuss an analogue representation in terms of these basis functions.

As implied by its name, an \ac{RKHS} is associated with a kernel function $k:\mathcal{M}\times \mathcal{M}\rightarrow\mathbb{R}$ that measures a pairwise affinity between \acp{RTF}. The kernel function must satisfy the following two conditions:
\begin{enumerate}
  \item Symmetry: $k(\mathbf{h}_i,\mathbf{h}_j)=k(\mathbf{h}_j,\mathbf{h}_i) \quad \forall \mathbf{h}_i,\mathbf{h}_j \in \mathcal{M}$.
  \item Positive semi-definite: the $n\times n$ matrix $\mathbf{K}$ with $K_{ij}=k(\mathbf{h}_i,\mathbf{h}_j)$ is positive semi-definite, for any arbitrary finite set of points $\{\mathbf{h}_i\}_{i=1}^{n} \in \mathcal{M}$.
\end{enumerate}

Another essential requirement from the kernel is that it defines a notion of locality, determined with accordance to a scaling factor $\varepsilon_k$: for $\Vert\mathbf{h}_i-\mathbf{h}_j\Vert\ll\varepsilon_k$, $k(\mathbf{h}_i,\mathbf{h}_j)\rightarrow 1$, and for $\Vert\mathbf{h}_i-\mathbf{h}_j\Vert\gg\varepsilon_k$, $k(\mathbf{h}_i,\mathbf{h}_j)\rightarrow0$.
A common choice is to use a Gaussian kernel with variance $\varepsilon_k$:
\begin{equation}
k(\mathbf{h}_i,\mathbf{h}_j)=\exp\left\{-\frac{\Vert\mathbf{h}_i-\mathbf{h}_j\Vert^2} {2\varepsilon_k} \right\}.
\label{eq:gauss}
\end{equation}
Clearly, the Gaussian kernel is a symmetric positive semi-definite function, and satisfies the locality property.

The locality property is of major importance in our case, since the kernel receives \acp{RTF}, sampled from the manifold $\mathcal{M}$. As discussed above, the manifold is in general nonlinear and is assumed to be locally linear over small patches. Due to its property of locality, the kernel function constitutes an affinity measure that respects the manifold structure.

An \ac{RKHS}, denoted as $\mathcal{H}_k$, is a Hilbert space of functions, mapping each $\mathbf{h}\in \mathcal{M}$ to $\mathbb{R}$, which is associated with a kernel $k$. We skip the formal definition of an \ac{RKHS} (for details see~\cite{aronszajn1950theory,berlinet2011reproducing}). Instead, we state the two main properties of an \ac{RKHS}:
\begin{itemize}
  \item for all $\mathbf{h}\in \mathcal{M}$, $k_\mathbf{h}(\cdot)\in \mathcal{H}_k$
  \item \textbf{The reproducing property:} for all $f\in\mathcal{H}_k$ and $\mathbf{h}\in \mathcal{M}$, $\langle f(\cdot),k_\mathbf{h}(\cdot) \rangle =f(\mathbf{h})$
\end{itemize}
where for each $\mathbf{h}\in \mathcal{M}$ we define the real valued function $k_\mathbf{h}(\cdot)\equiv k(\mathbf{h},\cdot)$.
The first property simply states that the \ac{RKHS} consists of all functions defined by the kernel $k$ at some point on the manifold. The second property implies that the kernel $k$ has a special property that it evaluates all the functions in the space by an inner product. For example, in $l_2$ the delta function has the reproducing property since it evaluates all the functions in $l_2$: $\left\langle \delta(\mathbf{h},\cdot),f(\cdot) \right\rangle_{l_2} =f(\mathbf{h})$. However, this does not define an \ac{RKHS}, since the delta function does not belong to $l_2$.

We have seen that an \ac{RKHS} is associated with a unique reproducing kernel function. In the opposite direction, known as the Moore-Aronszajn theorem, every symmetric, positive definite kernel $k$ defines a unique RKHS $\mathcal{H}_k$ that is given by the \emph{completion} (an expansion that includes the limits of all Cauchy sequences) of the space of functions spanned by the set $\{k_{\mathbf{h}_i}(\cdot)\}$:
\begin{equation}
\{f|f(\cdot)=\sum_i a_i k_{\mathbf{h}_i}(\cdot); i\in\mathds{N},a_i\in\mathds{R}, \mathbf{h}_i\in \mathcal{M} \}
\label{eq:span}
\end{equation}
with respect to the following inner product:
\begin{align}
\left\langle f(\cdot),g(\cdot) \right\rangle &=\left\langle\sum_i a_i k_{\mathbf{h}_i}(\cdot),\sum_j b_j k_{\mathbf{h}_j}(\cdot) \right\rangle \\ \nonumber
&=\sum_{i,j}a_i b_j k(\mathbf{h}_i,\mathbf{h}_j).
\label{eq:product}
\end{align}
It can be easily verified that the two mentioned properties of an \ac{RKHS} are satisfied by this definition. Obviously, the reproducing kernel belongs to the space, and the reproducing property holds, since:
\begin{align}
\left\langle f(\cdot), k_\mathbf{h}(\cdot)\right\rangle &=\left\langle \sum_i a_i k_{\mathbf{h}_i}(\cdot),k_\mathbf{h}(\cdot) \right\rangle \\ \nonumber
&=\sum_i a_i k(\mathbf{h}_i,\mathbf{h})=f(\mathbf{h}).
\label{eq:prop}
\end{align}

Additional view of an \ac{RKHS}, based on Mercer's theorem~\cite{mercer1909functions}, is discussed in Appendix~\ref{sec:appA}. According to this view point, any function $f\in \mathcal{H}_k$ can be represented by an orthogonal basis of functions $\{\psi_i(\cdot)\}$ related to the kernel $k$:
\begin{equation}
\mathcal{H}_k = \{ f |f(\cdot) =\sum_i a_i\psi_i(\cdot) \mbox{ and } ||f||_{\mathcal{H}_k} < \infty\}.
\end{equation}
To circumvent the computation of the basis functions, we use the representation of \eqref{eq:span}, in terms of the kernel function.

\subsection{Optimization and Manifold Regularization}
\label{sec:Opti}
In this section we present the optimization over the target function assuming that it belongs to an \ac{RKHS} ${\mathcal{H}_k}$ with a reproducing kernel $k$. Formally, we search for a function $f_c:\mathbb{C}^D\rightarrow \mathbb{R}\quad c\in \{x,y,z\}$ which is the inverse mapping between an \ac{RTF} and its corresponding position, i.e. $f_c(\mathbf{h})=p_c$. In this paper we focus on estimating one position coordinate, thus, we omit the coordinate subscript. However, the analysis, the results and the algorithm described here can be naturally extended to estimating several coordinates.

The search will be formulated by the following optimization problem:
\begin{equation}
f^*=\argmin_{f\in \mathcal{H}_k}\frac{1}{l}\sum_{i=1}^l V(f(\bar{\mathbf{h}}_i),\bar{p}_i)+\gamma_k\|f\|_{\mathcal{H}_k}^2+\gamma_M\|f\|_\mathcal{M}^2 \label{eq:man_reg}
\end{equation}
where $\|\cdot\|_{\mathcal{H}_k}^2$ is the \ac{RKHS} norm that corresponds to the inner product defined in~\eqref{eq:product}, $\|\cdot\|_\mathcal{M}^2$ is the \emph{intrinsic} norm defined with respect to the manifold $\mathcal{M}$, and $\gamma_k, \gamma_M$ are scalar parameters. The optimization problem consists of three components. The first term is an empirical cost function defined over the labelled samples $\{\bar{\mathbf{h}}_i\}_{i=1}^l$. The function $V$ evaluates the extent of correspondence between the evaluations of the target function $f(\bar{\mathbf{h}}_i)$ and the true labels $\bar{p}_i$. In our case, we set the cost function to be the squared loss function $(\bar{p}_i-f(\bar{\mathbf{h}}_i))^2$. Note that while the $l_2$ norm is not suitable for comparing between \acp{RTF}~\cite{laufer2015}, it is a reasonable choice for evaluating localization quality.

The two last terms in~\eqref{eq:man_reg} are regularization conditions. Roughly, their role is to prevent the solution from overfitting to the labelled examples. The second term is the Tikhonov regularization which penalizes the \ac{RKHS} norm of the function to impose smoothness condition in $\mathcal{H}_k$.
The additional regularization term, defined by the last term in~\eqref{eq:man_reg}, was introduced by Belkin et al.~\cite{Belkin2006}. This is an intrinsic regularization that represents a smoothness penalty of the function with respect to the manifold $\mathcal{M}$.

One natural choice for the intrinsic norm is to measure the gradient of the function along the manifold, i.e., to measure the variability of the function with respect to small movements on the manifold. Since the manifold structure is unknown, this term should be approximated on the basis of both labelled and unlabelled samples. The training set $\{\bar{\mathbf{h}}\}_{i=1}^N$, which includes different realizations of possible acoustic paths, can be viewed as a discrete sampling of the manifold $\mathcal{M}$. The manifold can be empirically represented by a graph in which the training samples are the graph nodes, and the weights of the edges are defined according to an $N\times N$ adjacency matrix $\mathbf{W}$ between the samples:
\begin{equation}
W_{ij}=
\left\{
	\begin{array}{ll}
		\exp\left\{-\frac{\Vert\bar{\mathbf{h}}_i-\bar{\mathbf{h}}_j\Vert^2} {2\varepsilon_w} \right\}  & \mbox{if } \bar{\mathbf{h}}_j \in \mathcal{N}_i \mbox{ or } \bar{\mathbf{h}}_i \in \mathcal{N}_j \\
		0 & \mbox{otherwise}
	\end{array}
\right.
\label{eq:W}
\end{equation}
where $\mathcal{N}_j$ is a set consisting of the $d$ nearest-neighbours of $\bar{\mathbf{h}}_j$ among $\{\bar{\mathbf{h}}_i\}_{i=1}^N$.

The adjacency matrix $\mathbf{W}$ is used to form the graph Laplacain $\mathbf{L}$, by $\mathbf{L}=\mathbf{D}-\mathbf{W}$, where $\mathbf{D}$ is a diagonal matrix with $\mathbf{D}_{ii}=\sum_{j=1}^N \mathbf{W}_{ij}$. It can be shown, under certain conditions, that the graph Laplacian $\mathbf{L}$ converges to a differential operator on the manifold $\mathcal{M}$, as was discussed in detail in~\cite{belkin2004semi,Coifman2005,Hein2005}. Hence, the gradient of the function along the manifold can be approximated using the graph Laplacian. Accordingly, an intrinsic measure of data-dependent smoothness is given by: $\|f\|_\mathcal{M}^2=\mathbf{f}^T\mathbf{L}\mathbf{f}$, where $\mathbf{f}=\left[f(\bar{\mathbf{h}}_1),...,f(\bar{\mathbf{h}}_N)\right]$. Thus, the optimization problem~\eqref{eq:man_reg} can be recast as:
\begin{equation}
f^*=\argmin_{f\in \mathcal{H}_K}\frac{1}{l}\sum_{i=1}^l (\bar{p}_i-f(\bar{\mathbf{h}}_i))^2+\gamma_k\|f\|_{\mathcal{H}_K}^2+\gamma_M\mathbf{f}^T\mathbf{L}\mathbf{f} .\label{eq:man_reg2}
\end{equation}
Further insight can be obtained by the expansion of the intrinsic regularization:
\begin{align}
\mathbf{f}^T\mathbf{L}\mathbf{f}&=\sum_{i,j=1}^Nf(\bar{\mathbf{h}}_i)L_{ij}f(\bar{\mathbf{h}}_j)\nonumber \\
&=\sum_{i=1}^N\left(\sum_{j=1}^N W_{ij}-W_{ii}\right)f^2(\bar{\mathbf{h}}_i)-\sum_{\substack{i,j=1 \\ i\neq j}}^NW_{ij}f(\bar{\mathbf{h}}_i)f(\bar{\mathbf{h}}_j) \nonumber\\
&=\sum_{i,j=1}^N W_{ij}f^2(\bar{\mathbf{h}}_i)-\sum_{i,j=1}^NW_{ij}f(\bar{\mathbf{h}}_i)f(\bar{\mathbf{h}}_j) \nonumber \\
&=\frac{1}{2}\sum_{i,j=1}^N W_{ij}\left(f(\bar{\mathbf{h}}_i)-f(\bar{\mathbf{h}}_j)\right)^2
\label{reg}
\end{align}
Intuitively, in \eqref{reg}, large $W_{ij}$, corresponding to strong similarity between $\bar{\mathbf{h}}_i$ and $\bar{\mathbf{h}}_j$, implies a tendency of $f(\bar{\mathbf{h}}_i)$ and $f(\bar{\mathbf{h}}_j)$ to be close to each other. For this reason, a truncated kernel was chosen in \eqref{eq:W}, since it is reasonable to penalize the function only when the corresponding \acp{RTF} resides in the same local neighbourhood.

Note that~\eqref{eq:man_reg2} is a semi-supervised formulation, since it involves both labelled and unlabelled samples. While the first term is merely based on the labelled samples, the last two terms are based on both labelled and unlabelled data. The two regularization parameters $\gamma_k$ and $\gamma_M$ balance between maximizing the correspondence to the labelled data, and maintaining low-complexity of possible solutions. In some respects, both regularization terms try to relate the target function to the manifold $\mathcal{M}$ by the two different kernels defined in~\eqref{eq:gauss} and~\eqref{eq:W}. Involving two kernels associated with different scales represents two different measurements of smoothness with respect to the manifold. Since the real structure of the manifold is unknown, the combination of both kernels is essential for obtaining a more accurate modelling of the manifold.

The Representer theorem~\cite{scholkopf2001generalized} states that the minimizer $f^*$ of \eqref{eq:man_reg2} is a linear combination of the kernel functions only in the set of labelled and unlabelled points $\{\mathbf{h}_i\}_{i=1}^N$, i.e., it is given by:
\begin{equation}
f^*(\mathbf{h})=\sum_{i=1}^Na_i k(\bar{\mathbf{h}}_i,\mathbf{h}) \label{eq:luexpen}
\end{equation}
where $\{a_i\}$ are the interpolation weights. In Appendix~\ref{sec:appB} we provide the proof of the theorem~\cite{Belkin2006}, which is derived by a simple orthogonality argument, and relies on  the specific structure of the functions in $\mathcal{H}_k$ implied by \eqref{eq:span}, together with the reproducing property that uniquely characterizes the \ac{RKHS}. The Representer theorem dramatically simplifies the regularized optimization problem of~\eqref{eq:man_reg2} so it can be formulated as a linear optimization over a finite set of parameters $\{a_i\}$.

\subsection{Derivation of the Localization Algorithm}
In the previous section we formulated an optimization problem with manifold regularization for recovering the target function $f$ in \eqref{eq:man_reg2}. Based on the Representer theorem stated in \eqref{eq:luexpen}, the optimization boils down to estimating the interpolation weights $\{a_i\}$. Substituting \eqref{eq:luexpen} in \eqref{eq:man_reg2} yields a second-order polynomial objective function of $\mathbf{a}=\left[a_1,...,a_N\right]^T$:
\begin{multline}
\mathbf{a}^*=\argmin_{\mathbf{a}\in \mathbb{R}^N}\frac{1}{l}
\left(\mathbf{q}-\mathbf{J}\mathbf{K}\mathbf{a}\right)^T\left(\mathbf{q}-\mathbf{J}\mathbf{K}\mathbf{a}\right)\\
+\gamma_k\mathbf{a}^T\mathbf{K}\mathbf{a}
+\gamma_M\mathbf{a}^T\mathbf{K}\mathbf{L}\mathbf{K}\mathbf{a} \label{eq:man_sol}
\end{multline}
where $\mathbf{K}$ is the $N \times N$ Gram matrix of $k$ defined by $K_{ij}=k(\bar{\mathbf{h}}_i,\bar{\mathbf{h}}_j)$; $\mathbf{I}_N$ is the $N\times N$ identity matrix; $\mathbf{J}$ is a $N \times N$ diagonal matrix: $\mathbf{J}=\mbox{diag}(1,...,1,0,...,0)$ with $l$ ones and $u$ zeros on its diagonal (functions as an indicator for the labelled samples in the set); and $\mathbf{q}=[\bar{p}_1,...,\bar{p}_l,0,...,0]^T$ is a label vector comprising the $l$ known positions of the labelled samples with $q_i=0$, for all $i>l$.
Differentiating with respect to $\mathbf{a}$ and comparing to zero, yields:
\begin{equation}
\frac{1}{l}\left(\mathbf{q}-\mathbf{J}\mathbf{K}\mathbf{a}\right)^T\left(-\mathbf{J}\mathbf{K}\right)\\
+\left(\gamma_k\mathbf{K}+\gamma_M\mathbf{K}\mathbf{L}\mathbf{K}\right)\mathbf{a}=0 \label{eq:diff}
\end{equation}
By rearranging \eqref{eq:diff}, we obtain the following linear system:
\begin{equation}
    \left[\mathbf{J} \mathbf{K}+l\gamma_k \mathbf{I}_N+ l\gamma_M \mathbf{L} \mathbf{K}\right]\mathbf{a}=\mathbf{q}.
    \label{eq:lineq}
\end{equation}
Accordingly, the interpolation weights $\mathbf{a}$ are given by:
\begin{equation}
\mathbf{a}^*=\left[\mathbf{J} \mathbf{K}+l\gamma_k \mathbf{I}_N+ l\gamma_M \mathbf{L} \mathbf{K}\right]^{-1}\mathbf{q}.    \label{eq:weights}
\end{equation}

Thus far, the computations were carried out offline based only on the training set, composed of both labelled and unlabelled samples. The input to the algorithm is a new pair of measurements $\{x(n),y(n)\}$, generated by an unknown source from an unknown location on the manifold. The corresponding feature vector $\mathbf{h}$ is estimated according to \eqref{eq:rtf}. The kernel between the new sample $\mathbf{h}$ and each of the training samples $\{\bar{\mathbf{h}}_i\}_{i=1}^N$, is evaluated. The position of the new measurement is estimated according to \eqref{eq:luexpen} by a weighted sum of these kernel evaluations multiplied by the weights given by \eqref{eq:weights}:
\begin{equation}
    \hat{p}=f(\mathbf{h})=\sum_{i=1}^Na_i^* k(\bar{\mathbf{h}}_i,\mathbf{h})
    \label{eq:estimate}
\end{equation}

\subsection{Adaptive Manifold Regularization for Localization}
\label{sec:SMR}
In this section we summarize the algorithm and formulate it in a dual-stage structure. We will take advantage of the fact that the optimization is derived in a semi-supervised manner, and propose an adaptive version. The algorithm is composed of two main parts: system adaptation and localization. In the adaptation stage, the interpolation weights $\mathbf{a}^*$ are computed according to~\eqref{eq:weights} based on the labelled and unlabelled samples, which were collected up to this point in time. In the localization stage, we receive a new pair of measurements $\{x(n),y(n)\}$ of an unknown source from an unknown location, and estimate the corresponding position based on the weights computed in the previous stage. The system is initialized with a small amount of labelled data, and after several iterations of the localization stage, the new unlabelled samples received during runtime, are utilized for system adaptation. Note that the adaptation process can potentially adjust to changes in the environmental conditions. However, this attribute was not examined in the current paper that focuses on static configurations. Examining dynamic scenarios with changing environmental conditions is left for future work.

The proposed \ac{MRL} algorithm is summarized in Algorithm~\ref{alg:algorithm1} and is illustrated in a flow diagram in Fig.~\ref{fig:flow}. The flow diagram emphasizes the duality between the two parts of the algorithm and the interaction between them. In the downward direction, the model of the system derived in the adaptation part is utilized for localization. In the upward direction, the new unlabelled samples acquired in the localization stage, are propagated and utilized for system adaptation. Moreover, note that the two rightmost (blue) blocks are semi-supervised whereas the rest of the blocks are unsupervised.

It should be emphasized that we do not present an update mechanism, but instead the weights are computed from scratch in each adaptation iteration. The development of a recursive version of the algorithm is left for future work.

The number of localization iterations between two successive adaptations is chosen empirically to obtain satisfactory performance. Note that if we choose a small value, increasing computational complexity, we will not gain much performance improvement. Adding only a small amount of unlabelled information do not change the weights significantly.

\begin{algorithm}
\caption{Manifold Regularization for Localization}

\begin{algorithmic}
\STATE \underline{System Adaptation}:\\
\SetKwInOut{Input}{Input}
\SetKwInOut{Output}{Output}
\Input{$N=l+u$ training points: $l$ labelled samples $\{\bar{x}_i(n),\bar{y}_i(n),\bar{p}_i\}_{i=1}^l$ and $u$ unlabelled samples $\{\bar{x}_i(n),\bar{y}_i(n)\}_{i=l+1}^N$  }
\Output{Interpolation weights $\mathbf{a}^*$}
\STATE \begin{enumerate}
         \item For each point estimate the corresponding \ac{RTF} $\bar{\mathbf{h}}_i$ according to \eqref{eq:rtf}.
         \item Construct the reproducing kernel matrix $\mathbf{K}$ and the adjacency matrix $\mathbf{W}$, according to \eqref{eq:gauss} and \eqref{eq:W} respectively, based on $\left\{\bar{\mathbf{h}}_i\right\}_{i=1}^N$.
         \item Compute the expansion weights $\mathbf{a}^*$ according to \eqref{eq:weights}.
       \end{enumerate}
\STATE
\STATE \underline{Localization:}\\
\Input{A new pair of measurements $\{x(n),y(n)\}$ produced by an unknown source from an unknown location}
\Output{Estimated position $\hat{p}$}
\STATE \begin{enumerate}
         \item Estimate the corresponding \ac{RTF} $\mathbf{h}$ according to \eqref{eq:rtf}.
         \item Compute the affinity between $\mathbf{h}$ and each of $\left\{\bar{\mathbf{h}}_i\right\}_{i=1}^N$, using the reproducing kernel.
         \item Estimate the new point location using the estimated interpolation weights: $\hat{p}=f(\mathbf{h})=\sum_{i=1}^Na_i^* k(\bar{\mathbf{h}}_i,\mathbf{h})$ .
       \end{enumerate}
\STATE After a several number of newly acquired samples, return to System Adaptation and add the new unlabelled samples.
\end{algorithmic}
\label{alg:algorithm1}
\end{algorithm}

\begin{figure*}[ht!]
\centering
\includegraphics[width=0.65\textwidth,height=0.33\textheight]{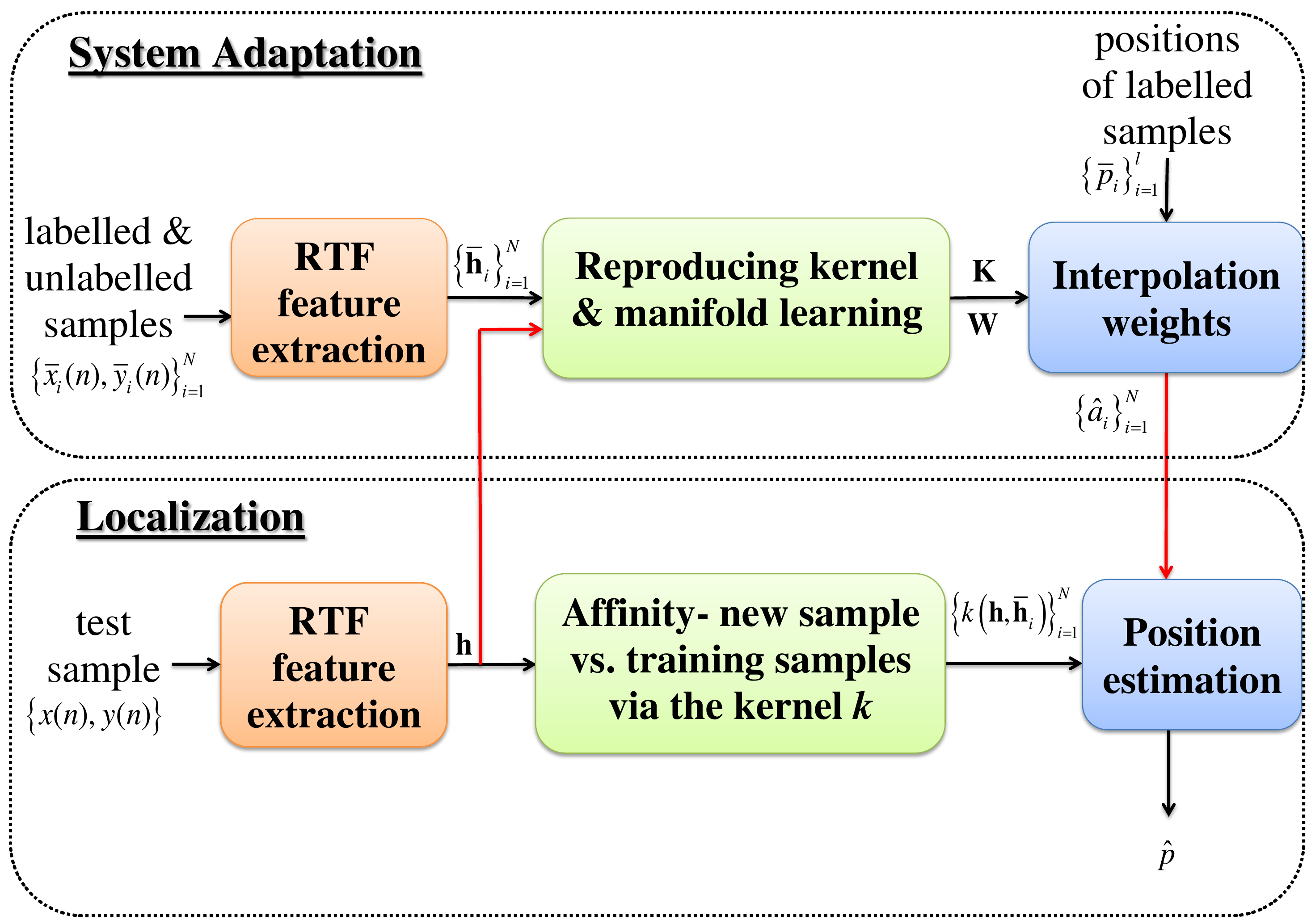}
\captionsetup{justification=centering}
\caption{Flow diagram of the proposed \ac{MRL} algorithm. The algorithm consists of two parts: system adaptation and localization. In the adaptation part, both labelled and unlabelled samples are utilized to build a data-driven model for the \acp{RTF} and relate it to the position of the source. In the localization part, the position of a new pair of measurements is estimated based on the model learnt in the adaptation stage. The newly acquired unlabelled samples in the localization stage, are propagated and utilized for system adaptation.}
\label{fig:flow}
\end{figure*}

\section{Review of Localization Based on Diffusion Mapping}
\label{sec:diffusion}
In this section we briefly review a method for semi-supervised localization that was presented in~\cite{laufer2013relative}. This method, that will be termed \ac{DDS}, is a dual stage approach based on the concepts of diffusion maps~\cite{Coifman2006,talmon2013diffusion}. In the first stage we recover the mapping between the original space $\mathbb{C}^D$ and the embedded space $\mathbb{R}^d$ which is governed by the controlling parameter, i.e. the position of the source. The second step is performing the localization by searching the neighbours of the new point among the training set in the new recovered space. Note that both the \ac{MRL} and \ac{DDS} algorithms rely on the information implied by the manifold $\mathcal{M}$. Nevertheless, there are several fundamental aspects that distinguish between the two, as will be elaborated in Section~\ref{sec:discuss}.

\subsection{Parametrization of the Manifold}
In the previous section we introduced a discrete representation of the manifold by a graph in which the training samples are the graph nodes, and the weights of the edges are defined according to the adjacency matrix $\mathbf{W}$ of \eqref{eq:W}. The adjacency graph is normalized to obtain the transition matrix $\mathbf{P}=\mathbf{D}^{-1}\mathbf{W}$, which defines a Markov process on the graph. Accordingly, $p(\mathbf{h}_i,\mathbf{h}_j)\equiv P_{ij}$ represents the probability of transition in a single Markov step from node $\mathbf{h}_i$ to node $\mathbf{h}_j$.

A nonlinear mapping of the samples into a new embedded space is obtained by spectral decomposition of the transition matrix $\mathbf{P}$. The embedding is based on a parametrization of the manifold $\mathcal{M}$, which forms an intrinsic representation of the data. We apply singular value decomposition to the transition matrix $\mathbf{P}$, and pick the $d$ principal right-singular vectors $\{\boldsymbol{\varphi}_j\}_{j=1}^{d}$ that corresponds to the $d$ largest singular values $\{\lambda_j\}_{j=1}^{d}$. The $d$ principal right-singular vectors forms the diffusion mapping of the samples into an Euclidean space $\mathbb{R}^d$, defined by:
\begin{equation}
\mathbf{\Phi}_d: \mathbf{h}_i \mapsto \left[ \lambda_1\varphi_1^{(i)}, \ldots, \lambda_d\varphi_d^{(i)}\right]^T.
\label{eq:map}
\end{equation}
where $\varphi_k^{(i)}$ denotes the $i$th entry of the vector $\varphi_k$.
Usually, $\boldsymbol{\varphi}_0$ is ignored since it is equal to a column vector of ones.

In the localization stage, the embedding should be extended, given a new \ac{RTF} sample $\mathbf{h}$, corresponding to a new pair of measurements $\{x(n), y(n)\}$ produced by unknown source from unknown location. Further spectral decomposition is unnecessary according to Nystr\"{o}m extension.  The new spectral coordinates are obtained by:
\begin{equation}
    \varphi^*_j=\frac{1}{\lambda_j}\mathbf{b}^T\boldsymbol{\varphi}_j \quad j\in \{1,\ldots,d\}
    \label{eq:Nystrom}
\end{equation}
where $\mathbf{b}$ is an affinity vector between the training set and the new test point:
\begin{equation}
    b_{i} =\exp \left\{-\frac {\|\mathbf{\bar{h}}_i-\mathbf{h}\|^2)} {\varepsilon_b}\right\}.
    \label{eq:b_kernel}
\end{equation}

\subsection{Nearest Neighbour Search on the Manifold}
\label{sec:NN}
In Section~\ref{sec:manifold} we described the structure of the acoustic manifold $\mathcal{M}$ of the \acp{RTF}. We stated that in order to properly measure affinities between \acp{RTF}, we should use the geodesic distance, which is the shortest path on the manifold. An approximation of the geodesic distance is given by diffusion distance, defined as:
\begin{align*}
D_\mathrm{Diff}^2(\mathbf{h}_i,\mathbf{h}_j)&=\|p\left(\mathbf{h}_i,\cdot\right)-p\left(\mathbf{h}_j,\cdot\right)\|_{\boldsymbol{\phi}_0}^2\\
&=\sum_{r=1}^N\left(p\left(\mathbf{h}_i,\mathbf{h}_r\right)-p\left(\mathbf{h}_j,\mathbf{h}_r\right)\right)^2/\boldsymbol{\phi}_0^{(r)}
\label{eq:diff_dist}
\end{align*}
where $\boldsymbol{\phi}_0$ is the most dominant left-singular vector of $\mathbf{P}$.

The diffusion distance incorporates information of the entire set to determine the connectivity between pairs of samples on the graph. Pairs of points who are closely related to the same subset of points in the graph, are considered close to each other and visa versa. It can be shown that the diffusion distance is equal to the Euclidean distance in the diffusion maps space when using all $N$ eigenvectors. This equivalence emphasizes the virtue of the diffusion mapping as it indicates that the mapping preserves the affinity between points with respect to the manifold. The diffusion distance can be well approximated by only the first $d$ principal eigenvectors~\cite{Coifman2006}, i.e.,
\begin{equation}
D_\mathrm{Diff}(\mathbf{h}_i,\mathbf{h}_j)\cong\|\mathbf{\Phi}_d(\mathbf{h}_i)-\mathbf{\Phi}_d(\mathbf{h}_j)\|.
\label{eq:diff_dist2}
\end{equation}

Equipped with the ability to measure distances along the manifold using the diffusion distance, we are able to properly quantify the affinities between \acp{RTF} samples. Samples which resides next to each other on the manifold, are assumed to be physically adjacent, i.e., they are likely to represent sources from close positions.
Thus, the position of a new sample can be estimated by searching for its neighbours on the manifold. Accordingly, the estimate will be formulated as a weighted sum of the positions of the labelled samples, where the weights are proportional to the corresponding diffusion distance between the new sample and each of the labelled samples:
\begin{equation}
\hat{p}=\sum_{i=1}^l\gamma \left(\mathbf{\bar{h}}_i\right)\bar{p}_i
\label{eq:diff_est}
\end{equation}
where the weights $\gamma \left(\bar{\mathbf{h}}_i\right)$ are given by:
  \begin{equation}
\gamma \left(\mathbf{h}_i\right)=\frac{\exp\left\{ -D_\mathrm{Diff}\left(\mathbf{h},\bar{\mathbf{h}}_i\right)/\varepsilon_\gamma\right\}}{\sum_{j=1}^l\exp\left\{ -D_\mathrm{Diff}\left(\mathbf{h},\bar{\mathbf{h}}_j\right)/\varepsilon_\gamma\right\} }.
\end{equation}
The \ac{DDS} procedure is summarized in Algorithm~\ref{alg:dist}.

Note that both labelled and unlabelled samples participate in the first stage, for the construction of the graph Laplacian. However, in the localization stage only the labelled samples are utilized because we rely on the labellings. Though both \ac{MRL} and \ac{DDS} algorithms have evident similarities, we show in the experimental part that the later is inferior due to its different utilization of unlabelled data.

\begin{algorithm}
\caption{\ac{DDS}}
\label{alg:dist}
\begin{algorithmic}
\STATE \underline{Diffusion Mapping}:\\
\SetKwInOut{Input}{Input}
\SetKwInOut{Output}{Output}
\Input{$N=l+u$ training points: $l$ labelled samples $\{\bar{x}_i(n),\bar{y}_i(n),\bar{p}_i\}_{i=1}^l$ and $u$ unlabelled samples $\{\bar{x}_i(n),\bar{y}_i(n)\}_{i=l+1}^N$  }
\Output{Embedding $\mathbf{\Phi}_d(\cdot)$}
\STATE \begin{enumerate}
         \item For each point estimate the corresponding \ac{RTF} $\bar{\mathbf{h}}_i$ according to \eqref{eq:rtf}.
         \item Construct the graph $\mathbf{W}$ based on $\left\{\bar{\mathbf{h}}_i\right\}_{i=1}^N$, and form the transition matrix $\mathbf{P}$.
         \item Employ singular value decomposition of $\mathbf{P}$ and obtain the singular-values $\{\mu_j\}$ and the right-singular vectors $\{\boldsymbol{\varphi}_j\}$.
         \item Construct the map $\mathbf{\Phi}_d$ according to \eqref{eq:map} to obtain an embedding that represents the intrinsic structure of manifold $\mathcal{M}$.
		\end{enumerate}
\STATE		
\STATE \underline{Localization}: \\
\Input{A new pair of measurements $\{x(n),y(n)\}$ produced by an unknown source from an unknown location}
\Output{Estimated position $\hat{p}$}
\STATE \begin{enumerate}
         \item Estimate the corresponding \ac{RTF} $\mathbf{h}$ according to \eqref{eq:rtf}.
         \item Apply Nystr\"{o}m extension according to \eqref{eq:Nystrom} to obtain the spectral coordinates of $\mathbf{h}$.
         \item Compute the approximated diffusion distance between $\mathbf{\Phi}_d(\mathbf{h})$ and each of the labelled samples $\{\mathbf{\Phi}_d(\bar{\mathbf{h}}_i)\}_{i=1}^N$, according to \eqref{eq:diff_dist2}.
         \item Estimate the new point location by \eqref{eq:diff_est} as a linear combination of the positions of the labelled samples according to distances in the diffusion mapped space.
       \end{enumerate}
\end{algorithmic}
\end{algorithm}

\section{Experimental Results}
\label{sec:results}

\subsection{Setup}
We describe the simulated setup used for conducting the experimental study. We simulated a $6\times6.2\times3$~m room, using an efficient implementation~\cite{habets2006}, of the image method~\cite{Image79}. In the room there are two microphones located at $(3,3,1)$~m and $(3.2,3,1)$~m, respectively. The source is known to be positioned at $2$~m distance with respect to the first microphone, on the same latitude. The goal is to recover the azimuth angle of the source. The initial analysis and examination of algorithms is carried out assuming that the azimuth angle of the source is ranging between $10^{\circ}\div60^{\circ}$. Then, the algorithm performance is further demonstrated on a wider range of azimuth angles between $0^{\circ}\div180^{\circ}$. Fig.~\ref{fig:Setup} illustrates the simulation setup.

\begin{figure}[ht!]
\centering
\includegraphics[width=0.5\textwidth,height=0.27\textheight]{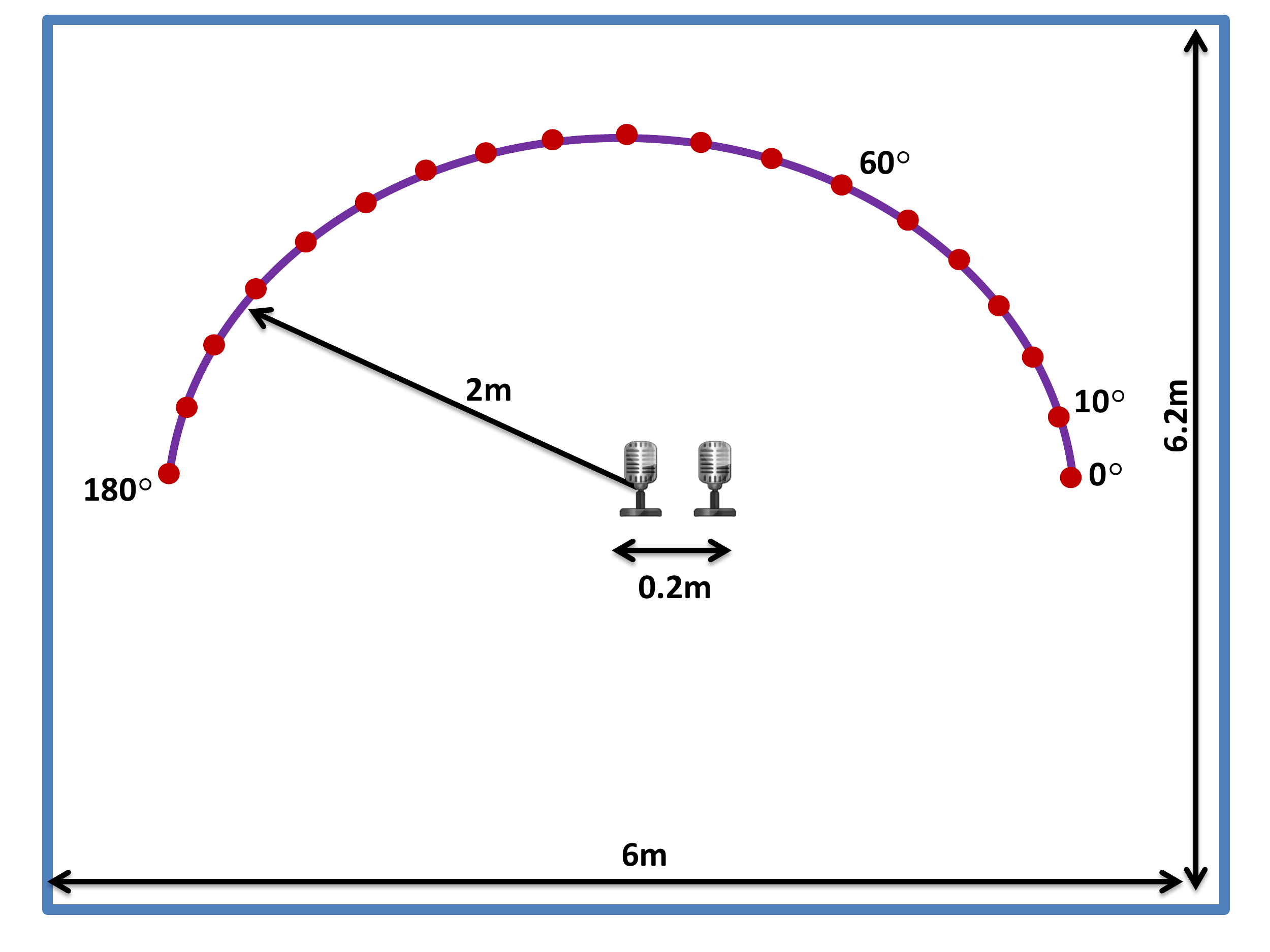}
\caption{An illustration of room setup. The purple arc marks the region where the source is assumed to be positioned. The red dots define the grid of the labelled examples.} \label{fig:Setup}
\end{figure}
For each location, we simulate a unique $3$~s speech signal, sampled at $16$~kHz. The clean speech is convolved with the corresponding \ac{AIR} and is contaminated by a WGN. This forms the measured signals in the two microphones. For each source location, the \ac{CPSD} and the \ac{PSD} are estimated with Welch's method with $0.128$~s windows and $75\%$ overlap and are utilized for estimating the \ac{RTF} in~\eqref{eq:rtf} for $D=2048$ frequency bins.

\subsection{Analysis of the Manifold}
\label{sec:analysis}
In this section we review the main results presented in~\cite{laufer2015}. We investigate the acoustic manifold of the \acp{RTF} and examine the proper distance between them that maintains physical adjacency. The analysis is carried out using a set of $N=400$ \ac{RTF} samples, corresponding to $400$ positions distributed uniformly in the specified range. Two alternative distance measures for quantifying the affinity between different \acp{RTF}, are addressed. We start with the Euclidean distance defined by:
\begin{equation}
D_\mathrm{Euc}(\mathbf{h}_i,\mathbf{h}_j)=\Vert\mathbf{h}_i-\mathbf{h}_j\Vert .
\end{equation}
The Euclidean distance is compared with the diffusion distance presented in Section~\ref{sec:NN}.

Fig.~\ref{fig:metrics&mapping}(a) depicts the Euclidean distance and the diffusion distance between each of the \acp{RTF} and a reference \ac{RTF} corresponding to $10^{\circ}$, as a function of the angle. We used moderate reverberation time of $300$~ms and $20$~dB \ac{SNR}. We observe that the monotonic behaviour of the Euclidean distance with respect to the angle is confined to approximately $3.2^{\circ}$ range. Consequently, we conclude that the Euclidean distance is meaningful only for small arcs. Thus, in general the Euclidean distance is not a good distance measure between \acp{RTF}. However it can be properly utilized when inserted into a Gaussian kernel in either the manifold regularization framework or the diffusion framework. According to its scaling parameter, the Gaussian kernel preserves small distances and suppresses large distances which are meaningless. The kernel scale should be adjusted to the distance at which monotonicity is maintained by the Euclidean distance, in order to preserve locality.

\begin{figure}[ht!]
\centering
\subfigure[]{\includegraphics[width=0.5\textwidth,height=0.3\textheight]{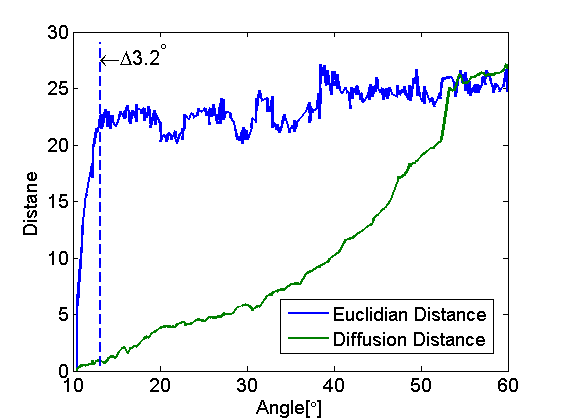}}
\subfigure[]{\includegraphics[width=0.5\textwidth,height=0.3\textheight]{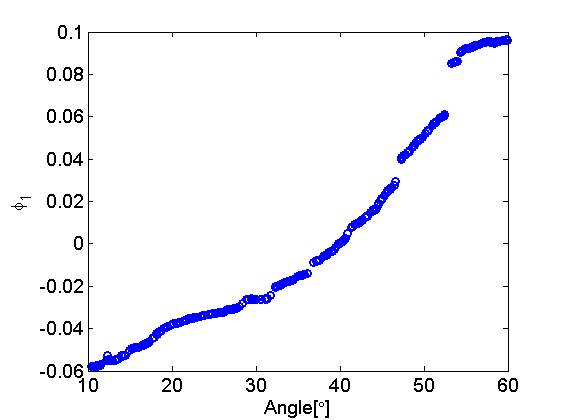}}
\caption{(a) The Euclidean distance and the diffusion distance between each of the \acp{RTF} and the \ac{RTF} corresponding to $10^{\circ}$, as a function of the angle. The dashed line shows the boundary angle until which monotonicity is preserved for the Euclidean distance. (b) Single-element diffusion mapping $\mathbf{\Phi}_1(\cdot)$.}
\label{fig:metrics&mapping}
\end{figure}

For the diffusion distance, only the first element in the mapping ($d=1$) was considered. This choice will be justified in the sequel. We can see that for almost the entire range, the diffusion distance remains monotonic with respect to the angle, indicating that it is an appropriate metric in terms of the source \ac{DOA}. Further insight into the mapping itself, is gained by plotting the single-element mapping $\mathbf{\Phi}_1(\cdot)$, as depicted in Fig.~\ref{fig:metrics&mapping}(b). We observe that the mapping corresponds well with the angle up to a monotonic distortion. Thus, the diffusion mapping successfully reveals the latent variable, namely, the position of the source. The almost perfect matching between the first element of the mapping and the corresponding angle, justifies the use of $d=1$ for estimating the diffusion distance.

To summarize, the presented results strengthen the claim on the existence of a nonlinear acoustic manifold. In small neighbourhoods around each point, the manifold is approximately flat, meaning that it resembles an Euclidean (linear) space. For larger scales the affinity between \acp{RTF} should be determined according to the geodesic distance on the manifold. The diffusion framework successfully reveals the latent variable controlling the acoustic manifold, and the diffusion distance properly reflects the distances on the manifold. These results motivate the involvement of manifold aspects in the localization process, as introduced by either the \ac{MRL} or the \ac{DDS} algorithms.

\subsection{Localization Results}
In this section we examine the ability of both \ac{DDS} and \ac{MRL} to recover the \ac{DOA} of the source. The training set consists $N=400$ representative samples distributed uniformly between $10^{\circ}\div60^{\circ}$. Among the training set, only $l=6$ samples were labelled, creating a grid with approximately $10^\circ$ distance between adjacent labelled samples, as depicted in Fig.~\ref{fig:Setup}. The performance is examined on a set of $T=120$ additional samples produced by unknown sources from unknown locations, confined to the defined range. The performance is measured according to the \ac{RMSE}, defined by:
\begin{equation}
\textrm{RMSE}= \sqrt{\frac{1}{T} \sum \limits _{i=1}^T \left\|p_i - \hat{p}_i\right\|^2}
\end{equation}
where $p$ stands for the azimuth angle of the source. To prevent the results from being dependent on a specific reflection pattern of a certain room section, we repeated the simulation with rotations of the constellation described above. The rotation angle was generated uniformly between $0^{\circ}\div360^{\circ}$. The positions of the second microphone, the training points and the test points were rotated by this angle, with respect to the first microphone. The \ac{RMSE} was averaged over $50$ rotations of the constellation.

The results of the \ac{MRL} and the \ac{DDS} algorithms are compared with that obtained by the classical \ac{GCC} algorithm~\cite{Knapp1976} for both noisy and reverberant conditions. In the first scenario we examine the algorithms' performance for different reverberation times with fixed \ac{SNR} of $20$~dB. In the second scenario the reverberation time is set to $300$~ms, and different noise levels are examined. The training set is generated with fixed \ac{SNR} level of $10$~dB. The \ac{RMSE} of the three algorithms in both scenarios, are shown in Fig.~\ref{fig:Compare_rmse}(a) and (b), respectively.

\begin{figure}[ht!]
\centering
\subfigure[]{\includegraphics[width=0.5\textwidth,height=0.3\textheight]{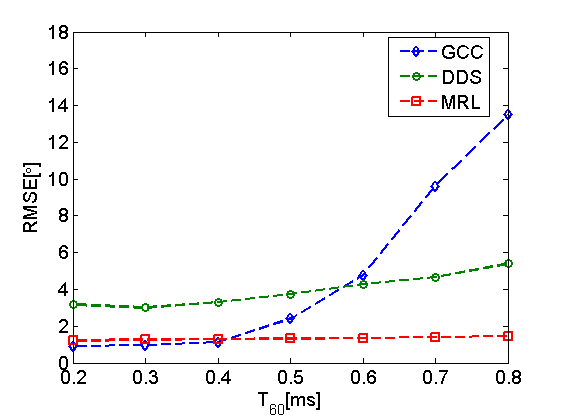}}
\subfigure[]{\includegraphics[width=0.5\textwidth,height=0.3\textheight]{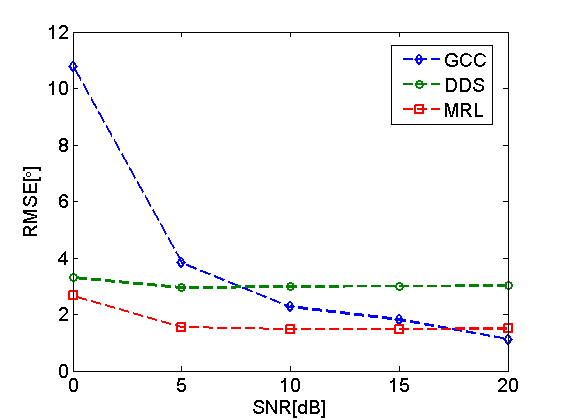}}
\caption{The RMSE of \ac{GCC}, \ac{DDS} and \ac{MRL} (a) as a function of the reverberation time (SNR=$20$~dB), and (b) as a function of \ac{SNR} ($T_{60}=300$~ms)}
 \label{fig:Compare_rmse}
\end{figure}

It can be seen in Fig.~\ref{fig:Compare_rmse}(a) that the \ac{GCC} performs well for low reverberation. However, its performance deteriorates gradually as reverberation increases, and becomes inferior compared with the performance of both the \ac{DDS} and the \ac{MRL} algorithms. In high reverberation, the \ac{GCC} is incapable of distinguishing between the direct arrival and the reflections. A misidentification of the direct path, results in a large estimation error. The proposed algorithms are more robust to reverberation, since the variations in the entire \acp{RTF} are taken in account.

Similar behaviour is observed in Fig.~\ref{fig:Compare_rmse}(b) in which different noise levels are examined. Here too, the \ac{GCC} method behaves well only in high \ac{SNR} conditions, and its performance  significantly degrades as noise level increases. When the measurements are contaminated by a significant amount of noise, the correlation between the two measurements is also very noisy, and the \ac{GCC} cannot correctly identify the peak corresponding to the direct path. On the contrary, the semi-supervised algorithms are much more robust with respect to the background noise, and most of the time obtain lower error. These type of algorithms can compensate for the information loss caused by the poor conditions, by capitalizing on the prior information inferred from the training samples.

We also observe that the \ac{MRL} approach exhibits better results compared with \ac{DDS} method. The reason for the visible gap between the \acp{RMSE} of the two algorithms is related to the different ways they utilize unlabelled data, and will be further elaborated in Section~\ref{sec:discuss}.

Finally, we examine the iterative process of the \ac{MRL} algorithm through the following sequential simulation. We used reverberation time of $500$~ms and $20$~dB \ac{SNR}. This time we examined a wider range of angles between $0^{\circ}\div180^{\circ}$. The initial adaptation was based on only $19$ labelled samples, creating a grid of $10^\circ$ distance between adjacent labelled samples, as depicted in Fig.~\ref{fig:Setup}. We conducted $9$ cycles of the sequential algorithm, each comprised of both stages of system adaptation and localization. In the localization stage, we estimated the angles of $90$ new samples from unknown locations. The total \ac{RMSE} of the all set was computed. In the following iteration, these $90$ new samples were treated as additional unlabelled data, utilized for system adaptation. The results are summarized in Fig.~\ref{fig:seq}.
\begin{figure}[ht!]
\centering
\includegraphics[width=0.5\textwidth,height=0.3\textheight]{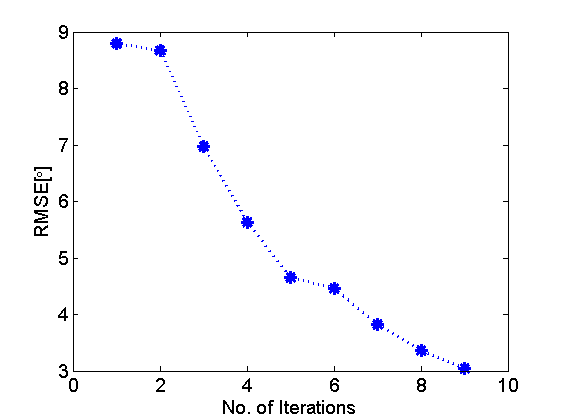}
\caption{The \ac{RMSE} of an iterative simulation of \ac{MRL} for angles in the range $0^{\circ}\div180^{\circ}$, where $90$ unlabelled points are added in each iteration. $T_{60}=500$~ms and SNR=$20$~dB}
\label{fig:seq}
\end{figure}

In this figure we observe that the \ac{RMSE} decreases as a function of the number of iterations, indicating that the unlabelled data has an important role in reducing the estimation error. However, after a considerable amount of unlabelled data is accumulated, the process stabilizes on a certain error, and additional samples are redundant.

\section{Discussion}
\label{sec:discuss}
In the previous section we demonstrated the robustness of the \ac{MRL} and the \ac{DDS} algorithms to noisy and reverberant conditions. We have also seen that the performance of the \ac{DDS} method is inferior with respect to that of the \ac{MRL} algorithm. In this section we discuss the interfacing points of both algorithms, on the one hand, and highlight the major differences between them, on the other hand.

To investigate the role of the unlabelled data in the \ac{MRL} method, we inspect the expansion weights $\mathbf{a}^*$ derived by the algorithm, as depicted in Fig.~\ref{fig:a_weights}. The blue line corresponds to the weights of $u=441$ unlabelled examples, while the red x-marks corresponds to the weights of $l=19$ labelled examples. We observe a monotonic, almost linear, behaviour of the coefficients with respect to the angle. The obtained behaviour of the \ac{MRL} coefficients, resembles the monotonic relation between the single-element diffusion mapping $\mathbf{\Phi}_1(\cdot)$ and the corresponding angle, depicted in Fig.~\ref{fig:metrics&mapping}(b). The correspondence between the two algorithms, suggests that they share similar aspects which lead to a parametrization of the manifold and recovery of the \ac{DOA} of the source.

However, we have seen that the \ac{MRL} is a better localizer compared with the \ac{DDS}. The difference between the two, is attributed to their different utilization of the unlabelled data. In the \ac{DDS} algorithm, the unlabelled data are used only in the learning phase, and the estimation merely comprises the positions of the labelled samples. In contrast, in \ac{MRL} the unlabelled data do not only take part in the recovery of the manifold, but also participate in the estimation itself, involving both labelled and unlabelled data \eqref{eq:luexpen}. Another advantage of \ac{MRL} over \ac{DDS} is that it is sequentially updated, hence, it is more suitable for on-line implementations.

\begin{figure}[ht!]
{\includegraphics[width=0.5\textwidth,height=0.3\textheight]{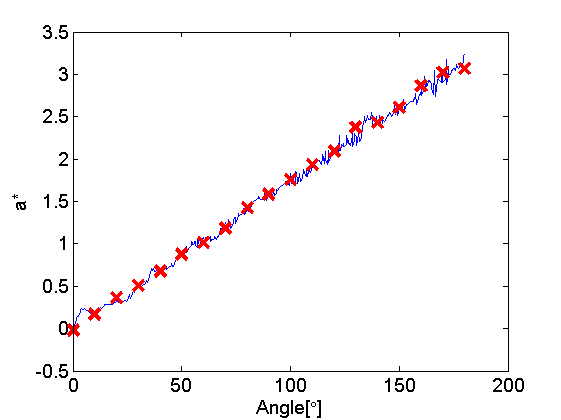}}
\caption{The estimated expansion weights $\mathbf{a}^*$ with respect to the corresponding angle.
The blue line corresponds to the weights of $u=441$ unlabelled examples, while the red x-marks corresponds to the weights of $l=19$ labelled examples.}
\label{fig:a_weights}
\end{figure}

\section{Conclusions}
\label{sec:conclusions}
A novel approach for semi-supervised localization, based on state-of-the-art manifold learning techniques, was presented. A set of representative samples in a defined room section is utilized for learning the acoustic manifold of the \acp{RTF} and building a data-driven model. Equipped with this knowledge, we find the function relating the samples and the corresponding positions by solving a regularized optimization problem in an \ac{RKHS}. Simulation results confirm the algorithm robustness in noisy and reverberant environments.

Integrating between traditional signal processing techniques and novel machine learning tools may be the key for better addressing adverse conditions, such as high noise levels and reverberations, that are the main causes for performance degradation of classical localization approaches. The current results indicate that the manifold perspective exhibits an interesting insight into the general structure of the acoustic responses and offers better solutions for common signal processing problems.

\begin{appendices}
\section{}
\label{sec:appA}
We define the integral operator on functions, associated with the kernel $k$, by the following integral transform:
\begin{equation}
[T_kf]=\int k(\mathbf{t},\mathbf{s})f(\mathbf{s})d\mathbf{s} = g(\mathbf{t}).
\end{equation}
The eigenfunctions $\{\psi_i(\cdot)\}$ and eigenvalues $\{\lambda_i\}$ of the integral operator satisfy:
\begin{equation}
[T_k\psi_i]=\int k(\mathbf{t},\mathbf{s})\psi_i(\mathbf{s})d\mathbf{s} = \lambda_i\psi_i(\mathbf{t}).
\end{equation}
According to Mercer's theorem, the kernel $k$ can be expanded by:
\begin{equation}
k(\mathbf{t}, \mathbf{s}) =\sum_i\lambda_i \psi_i(\mathbf{t})\psi_i(\mathbf{s})
\end{equation}
where the convergence is absolute and uniform.
The eigenfunctions $\{\psi_i(\cdot)\}_i$ form an orthogonal set and the \ac{RKHS} can be defined as the space of functions spanned by this set:
\begin{equation}
\mathcal{H}_k = \{ f |f(\cdot) =\sum_i a_i\psi_i(\cdot) \mbox{ and } ||f||_{\mathcal{H}_k} < \infty\}
\end{equation}
where the \ac{RKHS} norm is defined by the inner product:
\begin{equation}
\left\langle f,g \right\rangle=\left\langle\sum_i a_i \psi_i(\cdot),\sum_j b_j\psi_i(\cdot)\right\rangle
=\sum_i\frac{a_i b_i}{\lambda_j}.
\end{equation}
The reproducing property holds in this representation, since:
\begin{multline}
\left\langle f(\cdot),k_\mathbf{h}(\cdot)\right\rangle=\left\langle\sum_i a_i \psi_i(\cdot),\sum_j \lambda_j \psi_j(\mathbf{h})\psi_j(\cdot)\right\rangle\\
=\sum_i\sum_j a_i\lambda_j\psi_j(\mathbf{h})\left\langle\psi_i(\cdot),\psi_j(\cdot)\right\rangle
\stackrel{(33)}{=}\sum_i a_i\psi_j(\mathbf{h})=f(\mathbf{h})
\end{multline}

\section{}
\label{sec:appB}
\begin{theorem}
The minimizer of the optimization problem \eqref{eq:man_reg2} admits an expansion in terms of labelled and unlabelled examples:
\begin{equation}
f^*(\mathbf{h})=\sum_{i=1}^Na_i k(\bar{\mathbf{h}}_i,\mathbf{h})
\end{equation}
\end{theorem}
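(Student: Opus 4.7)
The plan is to prove this via the classical orthogonal decomposition argument used for the standard Representer theorem, adapted to account for the intrinsic (manifold) regularization term. First, I would introduce the finite-dimensional subspace
\[
V = \mathrm{span}\{k_{\bar{\mathbf{h}}_1}(\cdot), \ldots, k_{\bar{\mathbf{h}}_N}(\cdot)\} \subset \mathcal{H}_k,
\]
whose elements are exactly the candidate representations on the right-hand side of the claim. Because $V$ is finite-dimensional it is closed, so the Hilbert space structure of $\mathcal{H}_k$ yields the orthogonal decomposition $\mathcal{H}_k = V \oplus V^\perp$. Any $f \in \mathcal{H}_k$ can therefore be written uniquely as $f = f_\parallel + f_\perp$ with $f_\parallel \in V$ and $\langle f_\perp, k_{\bar{\mathbf{h}}_i}\rangle = 0$ for every $i = 1,\ldots,N$.

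Next I would show that the three terms of the objective in \eqref{eq:man_reg2} depend on $f$ only through $f_\parallel$, with the exception of the RKHS norm, which is minimized when $f_\perp = 0$. Applying the reproducing property to the orthogonal decomposition gives
\[
f(\bar{\mathbf{h}}_i) = \langle f, k_{\bar{\mathbf{h}}_i}\rangle = \langle f_\parallel, k_{\bar{\mathbf{h}}_i}\rangle + \langle f_\perp, k_{\bar{\mathbf{h}}_i}\rangle = f_\parallel(\bar{\mathbf{h}}_i),
\]
so the values of $f$ on the entire training set $\{\bar{\mathbf{h}}_i\}_{i=1}^{N}$ coincide with those of $f_\parallel$. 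Consequently, the empirical squared-loss term (which only queries the labelled $\bar{\mathbf{h}}_i$) and the intrinsic regularization $\mathbf{f}^T \mathbf{L} \mathbf{f}$ (which only queries all $N$ training points) are identical for $f$ and $f_\parallel$.

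For the Tikhonov term, the Pythagorean identity in the RKHS gives
\[
\|f\|_{\mathcal{H}_k}^2 = \|f_\parallel\|_{\mathcal{H}_k}^2 + \|f_\perp\|_{\mathcal{H}_k}^2 \;\geq\; \|f_\parallel\|_{\mathcal{H}_k}^2,
\]
with equality if and only if $f_\perp = 0$. Combining these observations, replacing any candidate $f$ by its projection $f_\parallel$ leaves the empirical and intrinsic terms unchanged while weakly decreasing the Tikhonov term. Hence at the minimizer we must have $f^*_\perp = 0$, which means $f^* \in V$ and thus $f^*(\mathbf{h}) = \sum_{i=1}^N a_i\, k(\bar{\mathbf{h}}_i,\mathbf{h})$ for some coefficients $\{a_i\}$.

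The bulk of the argument is standard and essentially free once one has the reproducing property and the orthogonal decomposition. The only subtlety beyond the classical supervised Representer theorem is verifying that the manifold-regularization term $\mathbf{f}^T \mathbf{L} \mathbf{f}$ is invariant under the projection $f \mapsto f_\parallel$; this is precisely what the reproducing-property computation above guarantees, since $\mathbf{L}$ acts on the vector of training-point evaluations, not on $f$ as a function on $\mathcal{M}$. Thus the main thing to highlight in the write-up is that including unlabelled samples in $V$ (rather than only the $l$ labelled ones) is exactly what is needed to absorb the manifold term into the orthogonality argument.
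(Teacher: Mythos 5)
Your proposal is correct and follows essentially the same argument as the paper's proof: the orthogonal decomposition of $f$ against the span of $\{k_{\bar{\mathbf{h}}_i}(\cdot)\}_{i=1}^N$, the reproducing-property computation showing the loss and intrinsic terms depend only on $f_{\Vert}$, and the Pythagorean inequality forcing $f_{\perp}=0$ at the minimizer. The observation you highlight at the end --- that the span must include the unlabelled points so the term $\mathbf{f}^T\mathbf{L}\mathbf{f}$ is absorbed by the same orthogonality argument --- is exactly the point of the paper's version of the Representer theorem.
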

\begin{proof}
Any function $f \in \mathcal{H}_k$ can be uniquely decomposed into $2$ components, which one is lying in the linear subspace spanned by the kernel functions in the training examples $f_{\Vert}=\mbox{span}\left\{k(\bar{\mathbf{h}}_i,\cdot),\mbox{ } i=1,\ldots,N\right\}$ and the other is lying in the orthogonal complement $f_{\perp}$:
\begin{equation}
f=f_{\Vert}+f_{\perp}=\sum_{i=1}^Na_i k(\bar{\mathbf{h}}_i,\mathbf{h})+f_{\perp}
\end{equation}
where $\langle f_{\perp}, k(\bar{\mathbf{h}}_j,\cdot) \rangle = 0$ for all $1\leq j\leq N$.

The above orthogonal decomposition and the reproducing property together, show that the evaluation of $f$ on any training point $\bar{\mathbf{h}}_j,\mbox{ } 1\leq j\leq N$ is independent of the orthogonal component $f_{\perp}$:
\begin{align}
f(\mathbf{h}_j)&=\left\langle f(\cdot),k(\bar{\mathbf{h}}_j,(\cdot))\right\rangle\\ \nonumber
&=\left\langle \sum_{i=1}^Na_i k(\bar{\mathbf{h}}_i,\cdot)+f_{\perp},k(\bar{\mathbf{h}}_j,(\cdot)) \right\rangle\\
&=\left\langle \sum_{i=1}^Na_i k(\bar{\mathbf{h}}_i,\cdot),k(\bar{\mathbf{h}}_j,(\cdot)) \right\rangle \nonumber
=\sum_{i=1}^Na_i k(\bar{\mathbf{h}}_i,\bar{\mathbf{h}}_j)
\end{align}
Consequently, the value of the empirical terms involving the loss function and the intrinsic norm in the optimization problem (the first and the third terms, respectively), are independent of $f_{\perp}$. For the second term (the norm of $f$ in $\mathcal{H}_k$), since $f_{\perp}$ is orthogonal to $\sum_{i = 1}^N a_i k(\mathbf{h}_i,\cdot)$ and only increases the norm of $f$ in $\mathcal{H}_k$, we have
\begin{align}
\|f\|_{\mathcal{H}_k}^2&=\Big\|\sum_{i=1}^Na_i k(\bar{\mathbf{h}}_i,\mathbf{h})+f_{\perp}\Big\|_{\mathcal{H}_k}^2\\ \nonumber
&=\Big \|\sum_{i=1}^Na_i k(\bar{\mathbf{h}}_i,\mathbf{h})\Big\|_{\mathcal{H}_k}^2+\Big\|f_{\perp}\Big\|_{\mathcal{H}_k}^2\\
&\geq \Big \|\sum_{i=1}^Na_i k(\bar{\mathbf{h}}_i,\mathbf{h})\Big\|_{\mathcal{H}_k}^2 \nonumber
\end{align}
Therefore setting $f_{\perp} = 0$ does not affect the first and the third terms of \eqref{eq:man_reg2}, while it strictly decreases the second term. It follows that any minimizer $f^{*}$ of  \eqref{eq:man_reg2} must have $f_{\perp} = 0$, and therefore admits a representation: $f^*(\mathbf{h})=\sum_{i=1}^Na_i k(\bar{\mathbf{h}}_i,\mathbf{h})$.

\end{proof}
\end{appendices}


\end{document}